\newcommand{\nm}{\medskip}
\DeclareMathOperator*{\argmin}{arg\,min}
\DeclareMathAlphabet{\itbf}{OML}{cmm}{b}{it}
\def\by{{{\itbf y}}}
\def\bx{{{\itbf x}}}
\def\bz{{{\itbf z}}}
\newcommand{\G}{\mathbf{G}}
\newcommand{\eps}{\epsilon}
\newcommand{\I}{{\mathbf{I}}}
\newcommand{\Ic}{\mathcal{I}}
\newcommand{\E}{\mathbf{E}}
\newcommand{\h}{\mathbf{H}}
\newcommand{\J}{{\mathbf{J}}}
\newcommand{\RR}{\mathbb{R}}
\newcommand{\bd}{\mathbf{d}}
\newcommand{\bdw}{\mathbf{d}_\W}
\newcommand{\bp}{\mathbf{p}}
\newcommand{\q}{\mathbf{q}}
\newcommand{\K}{{\kappa}}
\newcommand{\OL}{\mathcal{L}}
\newcommand{\bu}{\mathbf{u}}
\newcommand{\bv}{\mathbf{v}}
\newcommand{\W}{\mathcal{W}}
\newcommand{\ds}{\displaystyle}
\newcommand{\p}{\partial}
\newcommand{\hG}{\widehat{\G}}
\newcommand{\OM}{\mathcal{M}}
\newcommand{\OP}{{\mathcal{P}}}
\newcommand{\OT}{\mathcal{T}}
\newcommand{\OR}{\mathcal{R}}
\newcommand{\B}{\mathbf{B}}
\newcommand{\C}{\mathcal{C}}
\newtheorem{thm}{Theorem}[section]
\newtheorem{lem}[thm]{Lemma}
\numberwithin{equation}{section}
\newcommand{\pathfigures}{Figures/}
\begin{document}
\title{
Electromagnetic source localization with finite set of frequency measurements
}
\author{
Abdul Wahab \thanks{Address correspondence to A. Wahab, E-Mail: wahab@ciitwah.edu.pk, Phone: +92-(51)-9272614.} \thanks{\footnotesize Department of Mathematics, COMSATS Institute of Information Technology, 47040, Wah Cantt., Pakistan. (wahab@ciitwah.edu.pk, amerasheed@ciitwah.edu.pk, shum.88@yahoo.com)}
\and  Amer Rasheed\footnotemark[2]
\and  Rab Nawaz\thanks{Department of Mathematics, COMSATS Institute of Information Technology, Park Road, Chak Shahzad, 44000, Islamabad, Pakistan. (rabnawaz@comsats.edu.pk)} 
\and  Saman Anjum\footnotemark[2] 
}
\maketitle

\begin{abstract}
A phase conjugation algorithm for localizing an extended radiating electromagnetic source from boundary measurements of   the electric field is presented. Measurements are taken over a finite number of frequencies. The artifacts related to the finite frequency data are tackled with $l_1-$regularization blended with the \emph{fast iterative shrinkage-thresholding algorithm with backtracking} of Beck \& Teboulle.
\end{abstract}

\noindent {\footnotesize {\bf AMS subject classifications.} 35L05, 35R30; Secondary 47A52, 65J20}

\noindent {\footnotesize {\bf Keywords.} Phase conjugation; Electromagnetic waves;   Inverse source problem;  $l_1-$ regularization}

\section{Introduction}

Inverse source problems have been the subject of numerous studies over the recent past due to a plethora of applications in science and engineering, specially in biomedical imaging,  non destructive testing and prospecting geophysics; see, for instance, \cite{HAetal-11, HAetal-11b, noise, PAT, aggk, book1, book2, Vald, EEG-rev, Porter, jvc} and references therein. Several frameworks to recover spatial and temporal support of the acoustic, elastic and electromagnetic sources in time and frequency domain have been developed \cite{noise, PAT, Vald, source, EMTr}, including time reversal and phase conjugation  algorithms \cite{HAetal-11, HAetal-11b, aggk, Carminati, fink, GG, GWL}.

Time reversal algorithms and their frequency domain counterparts the phase conjugation algorithms have observed a significant success in the resolution of inverse problems in assorted disciplines of science since their first applications \cite{fink, book, josselin}. These algorithms exploit the self-adjointness and the reciprocity of non-attenuating waves. This substantiates that a wave  retraces its path backwards in chronology through a loss-less medium and converges at the location of its source (scatterer, reflector or emitter) on re-emission after reversing the time using transformation $t\to t_{final}-t$ or using phase conjugation in frequency domain \cite{fink, Carminati}. 

Their simplicity and robustness motivates the application of time reversal and phase conjugation algorithms to deal with source localization problems. If the sources are spatially  punctual (Dirac delta sources), a single measurement of the emitted wave over an imaging surface on single time or frequency is sufficient to locate the source position. However, when the sources are extended in space, the problem is well-posed when the available data at hand is computed over an interval of time $[0,T]$ for sufficiently large $T>0$ or over a large frequency bandwidth. The data collected at each time or frequency, contain useful information of an extended source that can be retrieved on re-emission of the data after time reversal or phase conjugation. The superposition of all the retrieved information provides the spatial or temporal support of the extended source with a resolution bounded by Raleigh diffraction limit \cite{HAetal-11b, HAetal-11, aggk}.  

In real physical configurations a complete measurement set over an entire frequency or time interval is not feasible. The only measurements available are made over a finite set of frequencies or time instances. The lack of information at hand, therefore induces noise in the reconstruction of the extended source and makes it challenging to reconstruct the support of the source stably and accurately with good resolution properties. All one can get is an initial guess of the support of the source.

This work aims to evince a radiating source for the Maxwell's equations using boundary measurements over a finite set of frequencies using a phase conjugation sensitivity framework blended with \emph{fast iterative shrinkage thresholding algorithm with backtracking} of \citet{Beck_teboulle} for $l_1-$regularization; refer also to \citet{PAT}. The phase conjugation algorithm yields an initial guess of the spatial support of an extended source whereas the $l_1-$ regularization optimizes the initial guess. Since the $l_1-$regularization term is not smooth, this motivates the application of iterative shrinkage thresholding algorithm of \citet{Beck_teboulle}.  This method belongs to the class of split gradient descent iterative schemes with a global convergence rate $O(k^{-2})$, where $k$ is the iteration counter.

The inverse source problems are ill-posed having non-uniqueness issues generally due to the presence of non radiating sources \cite{Bleistein, Porter, Vald}. The stability and localization of electromagnetic radiating sources with single, multiple and entire frequency or time data have been extensively studied in, for instance,  \cite{Porter, Bleistein, Albanese, bao, Bojarski}. The well-posedness of problem undertaken in this study, that is, the source problem for Maxwell's equation with a finite set of frequency surface measurements is established in particular in \cite{Vald}.

The investigation is sorted in the following order. The inverse problem is presented and a few key identities are  collected in Section \ref{s:problem}. In Section \ref{s:full}, an electromagnetic source is retrieved using entire frequency measurements with a phase conjugation functional. The functional is further adopted to locate a source using finite set of frequencies in Section \ref{s:fin}. An initial guess is retrieved and then optimized using $l_1-$regularization. The principle contributions of the investigation are summarized in Section \ref{s:conc}.

\section{Problem Formulation}\label{s:problem}
Let $\Omega\subset\RR^3$ be an open bounded domain with a Lipschitz boundary $\Gamma$. Consider 
\begin{equation}
\begin{cases}
\nabla\times\E-i\omega\mu_0\h =\mathbf{0}, &\bx\in\RR^3,
\\
\nabla\times\h+i\omega\eps_0\E= \J(\bx), & \bx\in\RR^3,
\end{cases}
\label{EMw}
\end{equation}
subject to the Silver-M\"uller radiation conditions
\begin{equation}
\ds\lim_{|\bx|\to\infty} |\bx| 
\Big(\sqrt{\mu_0}\h\times\hat{\bx}-\sqrt{\eps_0}\E\Big) =\mathbf{0},
\qquad\text{where}\quad\hat{\bx}:=\frac{\bx}{|\bx|},
\label{radiation}
\end{equation}
with frequency pulsation $\omega$, electric permittivity $\eps_0>0$ and magnetic permeability $\mu_0>0$, where $\E$ and $\h$ are the time-harmonic electric and magnetic fields respectively. Here $\J(\bx)\in\RR^3$ is the current  source density, assumed to be sufficiently smooth and compactly supported in  $\Omega$, that is, $\rm{supp}\big\{\J\big\}\subset\subset\Omega$.

Define the admissible set of frequencies and the boundary data consisting of the electric field respectively by
\begin{equation}
\W:=\big(\omega_n\big)_{n=1}^N
\quad\text{and}\quad \bd(\bx,\omega)= \E(\bx,\omega), \quad\forall (\bx,\omega)\in\Gamma\times\RR.
\label{Wd}
\end{equation}
Then, the ultimate goal of this work is to tackle the following problem: 

\begin{proof}[Statement of Problem] 
Given $\bdw:=\bd|_{\Gamma\times\W}$ for $N$ sufficiently large, identify the support, $\rm{supp}\big\{\J\big\}$, of current source density $\J$.
\end{proof}

\subsection{Preliminaries}\label{ss:pre}

In the sequel, we refer to $\K_0:=\omega\sqrt{\eps_0\mu_0}= {\omega}/{c_0}$ as the wave number with $c_0:= {1}/{\sqrt{\eps_0\mu_0}}$ being the wave speed in dielectrics and let $\nu$ be the outward unit normal to $\Gamma$. 
By virtue of \eqref{EMw}, the time-harmonic electric field satisfies the Helmholtz equation
\begin{equation}
\nabla\times\nabla\times\E- \K_0^2\E= i\omega\mu_0\J(\bx),\quad \bx\in\RR^3,
\label{Helm}
\end{equation}
subject to the outgoing radiation condition  \eqref{radiation}.

Let $\G_0^{\rm{ee}}(\bx,\omega)$ be the outgoing electric-electric Green's function for the Maxwell's equations \eqref{EMw} in $\RR^3$, that is, 
\begin{equation}
\nabla\times\nabla\times\G_0^{\rm{ee}}(\bx,\omega)-\K_0^2\G_0^{\rm{ee}}(\bx,\omega)
=i\omega\mu_0\I\delta_{\mathbf 0}(\bx), \quad \bx\in\RR^3,
\label{GeW}
\end{equation}
where  $\delta_{\bf{0}}(\bx)$ is the Dirac mass at $\bx=\mathbf{0}$. 
Following spatial reciprocity can be proved for isotropic dielectrics; see \cite{Wapenaar}:
\begin{eqnarray}
\G_0^{\rm{ee}}(\bx-\by,\omega)=\G_0^{\rm{ee}}(\by-\bx,\omega),\qquad\bx,\by\in\RR^3,\quad  \omega\in\RR.
\label{reciprocity}
\end{eqnarray}

\subsection{Electromagnetic Identities}\label{ss:identities}
The following identities are the key ingredients to elucidate the localization property of the imaging functional proposed in the next section. The variants of the identity in Lemma \ref{EM-HKI} can be found in literature; see, for instance, \cite{Carminati, chen}. The details are, however, provided here  since it is one of the building blocks of our reconstruction algorithm.

\begin{lem}[EM Helmholtz-Kirchhoff Identity]\label{EM-HKI}
Let $\B(0,R)$ be an open ball in $\RR^3$ with large radius $R\to\infty$ and boundary $\partial \B(0,R)$. Then, for all $\bx,\by\in\RR^3$, we have 
$$
\ds\lim_{R\to +\infty}\int_{\partial \B(0,R)}\G_0^{\rm{ee}}(\bx-\xi,\omega)\overline{\G_0^{\rm{ee}}}(\xi-\by,\omega)d\sigma(\xi)=\mu_0 c_0\Re e\bigg\{\G_0^{\rm{ee}}(\bx-\by,\omega)\bigg\},
$$
where the superposed bar indicates a complex conjugate.
\end{lem}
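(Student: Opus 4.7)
My plan is to combine a vector-valued integration by parts on $\B(0,R)$ with the Silver-M\"uller condition \eqref{radiation}, mirroring the derivation of the classical scalar Helmholtz--Kirchhoff identity. The volume side will collapse, via the Dirac sources in \eqref{GeW}, to a multiple of $\Re e\{\G_0^{ee}(\bx-\by,\omega)\}$, while the boundary side will reduce to a scalar multiple of the surface integral we want to evaluate.

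For fixed indices $j,l\in\{1,2,3\}$, I introduce the vector fields $\bu(\xi):=\G_0^{ee}(\bx-\xi,\omega)\be_j$ and $\bv(\xi):=\overline{\G_0^{ee}}(\xi-\by,\omega)\be_l$, so that the $(j,l)$-entry of the matrix-valued integrand of the lemma equals $\bu(\xi)\cdot\bv(\xi)$. Using \eqref{reciprocity} and \eqref{GeW}, $\bu$ is an outgoing solution of
$$
\nabla_\xi\!\times\!\nabla_\xi\!\times\!\bu-\K_0^2\bu=i\omega\mu_0\,\be_j\,\delta_{\mathbf 0}(\xi-\bx),
$$
while $\bv$ solves the complex-conjugated equation with source $-i\omega\mu_0\,\be_l\,\delta_{\mathbf 0}(\xi-\by)$ and obeys the conjugate (ingoing) form of \eqref{radiation}.

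Next, I will apply the vector Green's identity
$$
\int_{\B(0,R)}\!\big[\bu\cdot(\nabla\!\times\!\nabla\!\times\!\bv)-\bv\cdot(\nabla\!\times\!\nabla\!\times\!\bu)\big]dV=\int_{\partial\B(0,R)}\!\big[\bv\cdot((\nabla\!\times\!\bu)\!\times\!\nu)-\bu\cdot((\nabla\!\times\!\bv)\!\times\!\nu)\big]d\sigma,
$$
obtained from the divergence theorem applied to $\nabla\cdot(\bu\times\nabla\!\times\!\bv-\bv\times\nabla\!\times\!\bu)$. On the volume side the $\K_0^2$-contributions cancel, and the two Dirac sources leave point values of $\G_0^{ee}(\bx-\by,\omega)$ and $\overline{\G_0^{ee}}(\bx-\by,\omega)$. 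Invoking the matrix symmetry $[\G_0^{ee}]_{jl}=[\G_0^{ee}]_{lj}$ and the parity $\G_0^{ee}(-\cdot)=\G_0^{ee}(\cdot)$, both of which follow from the standard representation of $\G_0^{ee}$ via the scalar Helmholtz fundamental solution, these contributions combine into a constant multiple of $\Re e\{[\G_0^{ee}(\bx-\by,\omega)]_{jl}\}$.

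On the sphere $|\xi|=R$, with $\nu=\xi/|\xi|$, the Silver-M\"uller condition \eqref{radiation} applied to $\bu$ and to the conjugate field $\bv$ will give $(\nabla\!\times\!\bu)\!\times\!\nu=i\K_0\bu+o(R^{-1})$ and $(\nabla\!\times\!\bv)\!\times\!\nu=-i\K_0\bv+o(R^{-1})$ uniformly on the sphere, so the boundary integrand collapses to $2i\K_0\,\bu\cdot\bv$ up to a term that vanishes after integration and passage to the limit. Equating the two sides, cancelling the common constant factor, and recalling $\omega\mu_0/\K_0=\mu_0 c_0$ will deliver the identity. The principal obstacle is careful bookkeeping of the prefactor $i\omega\mu_0$ from the source in \eqref{GeW}, so that the signs conspire on the two sides to single out $\Re e$ rather than $\Im m$ of $\G_0^{ee}$; a secondary technical point is to verify that the $o(R^{-1})$ remainder survives integration over a sphere of area $O(R^2)$, which follows from the explicit far-field form of the scalar fundamental solution.
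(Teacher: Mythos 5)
Your proposal follows essentially the same route as the paper's own proof: test the curl-curl equations for $\G_0^{\rm{ee}}(\bx-\xi,\omega)$ and $\overline{\G_0^{\rm{ee}}}(\xi-\by,\omega)$ against constant (basis) vectors, apply the vector Green's identity on $\B(0,R)$, collapse the volume side via the Dirac sources to $\Re e\{\G_0^{\rm{ee}}(\bx-\by,\omega)\}$, and evaluate the boundary term with the radiation condition to produce $-2i\K_0$ times the desired surface integral, with $\omega\mu_0/\K_0=\mu_0 c_0$ giving the final constant. The plan is correct and matches the paper's argument in all essentials.
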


\begin{proof}
Recall that we have for all constant vectors $\bp, \q\in\RR^3$, and $\bx,\by\in\RR^3$,
\begin{eqnarray}
\nabla_\xi \times \nabla_\xi \times \G_0^{\rm{ee}}(\bx-\xi,\omega)\bp -\K_0^2\G_0^{\rm{ee}}(\bx-\xi,\omega)\bp&=&i\omega\mu_0\delta_\bx(\xi)\bp, \label{alpha}
\\
\nabla_\xi \times \nabla_\xi \times \overline{\G_0^{\rm{ee}}}(\by-\xi,\omega)\q-\K_0^2 \overline{\G_0^{\rm{ee}}}(\by-\xi,\omega)\q&=&-i\omega\mu_0\delta_\by(\xi)\q. \label{beta} 
\end{eqnarray}
 
Taking scalar product of \eqref{alpha} by $\overline{\G_0^{\rm{ee}}}(\by-\xi,\omega)\q$ and  \eqref{beta} by $\G_0^{\rm{ee}}(\bx-\xi,\omega)\bp$, 
subtracting the resultant equations 
and finally integrating over $\B(0,R)$, we arrive at 
\begin{eqnarray}
&&\int_\B\left(\nabla_\xi \times \nabla_\xi \times \overline{\G_0^{\rm{ee}}}(\by-\xi,\omega)\q\right) \cdot \G_0^{\rm{ee}}(\bx-\xi,\omega)\bp\, d\xi 
\nonumber
\\\nm
&&\qquad\qquad\qquad
-\int_\B\overline{\G_0^{\rm{ee}}}(\by-\xi,\omega)\q \cdot \left(\nabla_\xi \times \nabla_\xi \times \G_0^{\rm{ee}}(\bx-\xi,\omega)\bp\right)d\xi 
\nonumber
\\\nm
&&\quad=
-i\omega\mu_0\int_\B\overline{\G_0^{\rm{ee}}}(\by-\xi,\omega)\q\cdot\delta_\bx(\xi)\bp d\xi-i\omega\mu_0\int_{\B}\G_0^{\rm{ee}}(\bx-\xi,\omega)\bp\cdot\delta_\by(\xi)\q d\xi,
\nonumber
\\\nm
&&\quad=
-i\omega\mu_0\bp\cdot\overline{\G_0^{\rm{ee}}}(\by-\bx,\omega)\q-i\omega\mu_0\q\cdot\G_0^{\rm{ee}}(\bx-\by,\omega)\bp
\nonumber
\\\nm 
&&\quad
=
-2i\omega\mu_0\q \cdot\Re e\bigg\{\G_0^{\rm{ee}}(\bx-\by,\omega)\bigg\}\bp.\label{em:rhs}
\end{eqnarray}

On the other hand, recall that, for all $\bu,\bv\in\C^2(\overline{\B})^3$
\begin{eqnarray*}
&&\ds\int_\B{\left(\nabla \times \nabla \times \bu\right) \cdot \bv}dx-\int_\B{\bu \cdot \left(\nabla \times \nabla \times \bv\right)}dx
\\\nm
&&
\qquad\qquad\qquad
\ds=-\int_{\p \B}{(\bu \times\nu) \cdot (\nabla \times \bv)}d\sigma(x)-\int_{\p \B}{(\nabla\times \bu \times \nu)\cdot \bv}d\sigma(x).
\end{eqnarray*}
where $d\sigma$ is the surface element.
Substitute $\bu = \overline{\G_0^{\rm{ee}}}(\by-\xi,\omega)\q$ and  $\bv = \G_0^{\rm{ee}}(\bx-\xi,\omega)\bp$ 
to get
\begin{eqnarray*}
&&\int_\B\left(\nabla_\xi \times \nabla_\xi \times \overline{\G_0^{\rm{ee}}}(\by-\xi,\omega)\q\right) \cdot \G_0^{\rm{ee}}(\bx-\xi,\omega)\bp\, d\xi 
\\\nm
&&\qquad\qquad
- \int_\B\overline{\G_0^{\rm{ee}}}(\by-\xi,\omega)\q \cdot \left(\nabla_\xi \times \nabla_\xi \times \G_0^{\rm{ee}}(\bx-\xi,\omega)\bp\right)d\xi 
\\\nm
&&= -\int_{\p \B}\left(\overline{\G_0^{\rm{ee}}}(\by-\xi,\omega)\q \times \nu\right) \cdot \left(\nabla_\xi \times \G_0^{\rm{ee}}(\bx-\xi,\omega)\bp\right)d\xi 
\\\nm
&&
\qquad\qquad
-\int_{\p \B}\left(\nabla_\xi\times \overline{\G_0^{\rm{ee}}}(\by-\xi,\omega)\q \times \nu\right) \cdot \left(\G_0^{\rm{ee}}(\bx-\xi,\omega)\bp\right)d\sigma(\xi),
\\\nm
&&
= -\int_{\p \B}\overline{\G_0^{\rm{ee}}}(\by-\xi,\omega)\q\cdot\left(\nu \times \nabla_\xi \times \G_0^{\rm{ee}}(\bx-\xi,\omega)\bp\right)d\xi
\\\nm
&&
\qquad\qquad
+\int_{\p \B}\left(\nu\times\nabla_\xi\times \overline{\G_0^{\rm{ee}}}(\by-\xi,\omega)\q \right) \cdot \left(\G_0^{\rm{ee}}(\bx-\xi,\omega)\bp\right)d\sigma(\xi).
\end{eqnarray*}

Now from Sommerfeld radiation conditions, 
$$
\nu \times \nabla_\xi \times \G_0^{\rm{ee}}(\bx-\xi,\omega)\bp = i\K_0 \G_0^{\rm{ee}}(\bx-\xi,\omega)\bp+ O\left(R^{-2}\right),
$$
where the order term vanishes as $R\to\infty$.
Therefore,
\begin{align}
\int_\B &\left(\nabla_\xi \times  \nabla_\xi \times \overline{\G_0^{\rm{ee}}}(\by-\xi,\omega)\q\right)\cdot \G_0^{\rm{ee}}(\bx-\xi,\omega)\bp\, d\xi 
\nonumber
\\\nm
&\quad
- \int_\B\overline{\G_0^{\rm{ee}}}(\by-\xi,\omega)\q \cdot \left(\nabla_\xi \times \nabla_\xi \times \G_0^{\rm{ee}}(\bx-\xi,\omega)\bp\right)d\xi 
\nonumber
\\\nm
&\qquad
\simeq
-2\ds\int_{\p \B}i\K_0\G_0^{\rm{ee}}(\bx-\xi,\omega)\bp\cdot\overline{\G_0^{\rm{ee}}}(\by-\xi,\omega)\q\, d\sigma(\xi)
\nonumber
\\\nm
&\qquad
=
-2i\K_0\int_{\p \B}\q\cdot\overline{\G_0^{\rm{ee}}}(\by-\xi,\omega)\G_0^{\rm{ee}}(\bx-\xi,\omega)\bp d\sigma(\xi),
\label{em:lhs}
\end{align}
where we have used the reciprocity relation \eqref{reciprocity}. 

Finally, comparing Equations \eqref{em:lhs} and \eqref{em:rhs}, 
and by varying and  choosing  $\bp$ and $\q$  as the basis vectors in $\RR^d$ we get
$$
\int_{\p \B}\overline{\G_0^{\rm{ee}}}(\by-\xi,\omega)\G_0^{\rm{ee}}(\bx-\xi,\omega)\, d\sigma(\xi)
\simeq
\mu_0 c_0\Re e\bigg\{\G_0^{\rm{ee}}(\bx-\by,\omega)\bigg\},
$$
which leads to the conclusion by tending $R\to\infty$.
\end{proof}

\begin{lem}\label{lem2}
For all $\bx,\by\in\RR^3$, $\bx\neq\by$,
$
\quad\ds\frac{\epsilon_0}{2\pi}\int_\RR\Re e\bigg\{\G_0^{\rm{ee}}(\bx-\by,\omega)\bigg\}\,d\omega= \delta_\bx(\by)\I.
$
\end{lem}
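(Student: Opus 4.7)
The plan is to reduce the identity to a Fourier-inversion evaluation and a distributional computation. First, I would use the reality relation $\overline{\G_0^{\rm{ee}}(\bx,\omega)}=\G_0^{\rm{ee}}(\bx,-\omega)$, obtained by conjugating \eqref{GeW} and noting that complex conjugation exchanges the outgoing and incoming radiation conditions while sending $\omega$ to $-\omega$. This converts the integral of the real part into $\int_\RR \G_0^{\rm{ee}}(\bx-\by,\omega)\,d\omega$, which, viewed as a temporal Fourier inversion, equals $2\pi$ times the time-domain Maxwell Green's tensor evaluated at $t=0$.

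Next, I would invoke the representation $\G_0^{\rm{ee}}(\bx,\omega)=i\omega\mu_0(\I+c_0^2\omega^{-2}\nabla\nabla^T)g_0(\bx,\omega)$, where $g_0(\bx,\omega)=e^{i\omega|\bx|/c_0}/(4\pi|\bx|)$ is the scalar outgoing Helmholtz Green's function; the derivation is analogous to the verification of \eqref{GeW} and splits $\G_0^{\rm{ee}}$ into a wavelike part $i\omega\mu_0\I g_0$ and an instantaneous (Coulomb-type) part $ic_0^2\mu_0\omega^{-1}\nabla\nabla^T g_0$. I would then integrate each piece over $\omega\in\RR$ using the distributional Fourier identities $\int_\RR \omega\sin(\omega r/c_0)\,d\omega=-2\pi c_0^2\delta'(r)$ and $\int_\RR \omega^{-1}\sin(\omega r/c_0)\,d\omega=\pi\,\mathrm{sgn}(r)$, producing expressions involving $\delta'(|\bx-\by|)/|\bx-\by|$ and $\nabla\nabla^T(1/|\bx-\by|)$. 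These I would reduce to $\delta_\bx(\by)\I$ via the $\RR^3$-distributional identities $\delta'(|\bx|)/|\bx|=-4\pi\delta_{\mathbf 0}(\bx)$ and $\partial_i\partial_j(4\pi|\bx|)^{-1}=\mathrm{p.v.}\,(3\hat{x}_i\hat{x}_j-\delta_{ij})/(4\pi|\bx|^3)-\frac{1}{3}\delta_{ij}\delta_{\mathbf 0}(\bx)$: the smooth principal-value pieces cancel off $\bx=\by$ and the atomic pieces combine, after which the prefactor $\epsilon_0/(2\pi)$ together with $\epsilon_0\mu_0c_0^2=1$ produces the unit coefficient on $\delta_\bx(\by)\I$.

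The main obstacle is the delicate distributional bookkeeping, and in particular the choice of $i0^+$ prescription at the pole $\omega=0$ of the longitudinal Coulomb-type term, whose Plemelj contribution is essential for obtaining the correct coefficient in front of the delta. A conceptually cleaner alternative would be to pass to the spatial Fourier domain first, decompose $\hat{\G}_0^{\rm{ee}}(\bk,\omega)$ with respect to the transverse and longitudinal projectors $P_T=\I-\hat{\bk}\hat{\bk}^T$ and $P_L=\hat{\bk}\hat{\bk}^T$, evaluate the $\omega$-integral via the mass-shell distribution $\delta(|\bk|^2-\omega^2/c_0^2)$ in the transverse sector and via the Plemelj formula at $\omega=0$ in the longitudinal sector, and only then invert the spatial Fourier transform; the resulting $\bk$-independent scalar residue makes the final inversion to $\delta_{\mathbf 0}(\bx-\by)\I$ immediate, bypassing the near-origin calisthenics of the direct approach.
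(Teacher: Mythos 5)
Your route is genuinely different from the paper's. The paper never writes the kernel explicitly: it introduces the time-domain Green's tensor $\widehat{\G}$ solving the causal wave equation \eqref{g}, integrates that equation across $t=\tau$ to extract the jump $\partial_t\widehat{\G}\big|_{t=\tau^+}=-c_0^2\delta_\by(\bx)\I$, and then converts $\int_\RR i\omega\G\,d\omega$ into the value of $\partial_t\widehat{\G}$ at $t=0$ by Fourier inversion, finishing with $\G_0^{\rm{ee}}=-i\omega\mu_0\G$. Your opening moves (reality in $\omega$, hence $\int_\RR\Re e\{\cdot\}\,d\omega=\int_\RR(\cdot)\,d\omega$, read as the time-domain kernel at $t=0$) reproduce that skeleton, but you then switch to a direct distributional evaluation of the explicit dyadic kernel $i\omega\mu_0(\I+c_0^2\omega^{-2}\nabla\nabla^T)e^{i\omega|\bx|/c_0}/(4\pi|\bx|)$, which is a legitimate and in principle more informative alternative.

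However, the computation as described does not close, and the failure is precisely at the step you pass over. First, in the transverse sector the substitution $\delta'(|\bx|)/|\bx|=-4\pi\delta_{\mathbf 0}(\bx)$ overcounts by a factor of $2$, because $\delta'$ sits at the endpoint $r=0$ of the radial integration; doing this sector in the spatial Fourier domain (your own suggested alternative) gives $\int_\RR\omega\sin(\omega|\bx|/c_0)\,(4\pi|\bx|)^{-1}d\omega=\pi c_0^2\,\delta_{\mathbf 0}(\bx)$, so the effective constant is $-2\pi$, not $-4\pi$. Second, and fatally, the longitudinal sector produces a single principal-value term proportional to $\nabla\nabla^T(4\pi|\bx-\by|)^{-1}$ and there is nothing for it to cancel against: the transverse sector contributes only a multiple of $\delta_{\mathbf 0}\I$. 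Carried out honestly, your decomposition lands on a multiple of the transverse delta $\delta_{ij}\delta_{\mathbf 0}+\partial_i\partial_j(4\pi|\bx|)^{-1}$ (with a negative overall sign to boot), not on $\delta_\bx(\by)\I$; the surviving off-diagonal dipolar tail is the familiar footprint of non-radiating, longitudinal sources. So the claim that the principal-value pieces cancel and the atomic pieces combine to give unit coefficient is a genuine gap, not bookkeeping to be deferred. Note that the paper's proof avoids confronting this only by assuming $\widehat{\G}$ is continuous across $t=\tau$, which fails in the longitudinal sector where $\nabla\times\nabla\times$ degenerates; to salvage your argument you would have to restrict to divergence-free current densities (so that only the transverse part of the kernel is ever tested against $\J$) or accept the transverse delta as the true output of the integral.
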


\begin{proof}
Let  $\widehat{\G}$  be the solution to  
\begin{eqnarray}
\label{g}
\dfrac{1}{c^2_0}\dfrac{\partial^2\widehat{\G}}{\partial t^2}(\bx,t;\by,\tau)+\nabla\times\nabla\times\widehat{\G}(\bx,t;\by,\tau)=-\delta_\by(\bx)\delta_\tau(t)\I, &\bx,\by \in\RR^3, t>\tau,
\end{eqnarray}
and  let $\G(\bx,\by,\omega)$ be the Fourier transform of $\widehat{\G}(\bx,t;\by,0)$. Further, the following causality conditions are imposed
$$
\widehat{\G}(\bx,t;\by,\tau)=0=\dfrac{\partial\widehat{\G}}{\partial t}(\bx,t;\by,\tau),  \qquad \bx,\by\in\RR^3, t<\tau.
$$
Then, integrating \eqref{g} over an infinitesimal time interval from $\tau^-$ to $\tau^+$, using the causality conditions above and the continuity of $\widehat{\G}$ away from $t=\tau$,   we find out that 
\begin{eqnarray}
\dfrac{\partial\widehat{\G}}{\partial t}(x,t;y,\tau)\Big|_{t=\tau^+}=-c^2_0\delta_\by(\bx)\I.
\end{eqnarray}
Therefore, integrating above equation over $t$ and using the Parseval's identity, yields
\begin{eqnarray}
\ds\int_\RR i\omega\G(\bx,\by,\omega)d\omega=2c_0^2\pi\delta_\by(\bx)\I
\int_\RR\delta_0(\omega)d\omega=2c_0^2\pi\delta_\by(\bx)\I,
\label{relation}
\end{eqnarray}
where we have made use of the fact that the Fourier transform of $1$ is $2\pi\delta_0(\omega)$.
Finally, since 
$$
\G_0^{\rm{ee}}(\bx-\by,\omega)=-i\omega\mu_0\G(\bx,\by,\omega),
$$
and $\delta_\by(\bx)\I$ is real, relation \eqref{relation} leads to the conclusion. 
\end{proof}

\section{Reconstruction with Full Bandwidth Measurements}\label{s:full}

This section is in order to provide the building blocks to handle finite frequency data problem. As a first step toward the ultimate goal, we find the spatial support of the current source, ${\rm{supp}}\{\J\}$, from data $\bd(\bx,\omega)$ with $\omega\in\RR$. 

For a fixed frequency $\omega\in\RR$, define the adjoint field $\E^*$ to be the solution to 
\begin{equation}
\nabla\times\nabla\times\E^*(\bx,\omega) -\K_0^2 \E^*(\bx,\omega) = i\omega\mu_0\overline{\bd}(\bx,\omega)\chi_{\Gamma}(\bx), \quad (\bx,\omega)\in \RR^3\times\RR,
\label{adjoint}
\end{equation}
where $\chi_\Gamma$ is the characteristic  function of the boundary $\Gamma$. Then, the \emph{phase conjugation} functional is defined by
\begin{equation}
\Ic(\bx):=\ds\frac{\epsilon_0}{2\pi c_0\mu_0}\int_\RR\E^*(\bx,\omega) d\omega,\quad\forall \bx\in\RR^3.
\label{I}
\end{equation}
Then, $\Ic(\bx)$ yields the approximate  spatial support of  the current density $\J(\bx)$. In fact, we have the following theorem.

\begin{thm}\label{thm:full}
For $\bx\in\Omega$ sufficiently far from $\Gamma$ compared to the wavelength of the wave impinging upon $\Omega$, 
$$
\Ic(\bx) \simeq \J(\bx).
$$
\end{thm}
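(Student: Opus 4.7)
The plan is to unwind the definition of $\Ic$ by expressing $\E^*$ as a Green's-function convolution of the boundary data, rewriting the data $\bd=\E|_\Gamma$ likewise in terms of $\J$, and then recognising inside the resulting triple integral the two identities just proved.

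First I would solve \eqref{adjoint} explicitly. Since $\nabla\times\nabla\times\G_0^{\rm ee}-\K_0^2\G_0^{\rm ee}=i\omega\mu_0\I\delta_0$, the representation formula gives
\begin{equation*}
\E^*(\bx,\omega)=\int_\Gamma \G_0^{\rm ee}(\bx-\xi,\omega)\,\overline{\bd}(\xi,\omega)\,d\sigma(\xi).
\end{equation*}
Next, by the same representation applied to the forward problem \eqref{Helm}, for $\xi\in\Gamma$,
\begin{equation*}
\bd(\xi,\omega)=\E(\xi,\omega)=\int_\Omega \G_0^{\rm ee}(\xi-\by,\omega)\,\J(\by)\,d\by,
\end{equation*}
so that (using reality of $\J$) $\overline{\bd}(\xi,\omega)=\int_\Omega \overline{\G_0^{\rm ee}}(\xi-\by,\omega)\J(\by)\,d\by$. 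Substituting and exchanging the order of integration yields
\begin{equation*}
\E^*(\bx,\omega)=\int_\Omega\!\left[\int_\Gamma \G_0^{\rm ee}(\bx-\xi,\omega)\overline{\G_0^{\rm ee}}(\xi-\by,\omega)\,d\sigma(\xi)\right]\J(\by)\,d\by.
\end{equation*}

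The second step is to apply Lemma \ref{EM-HKI} to the inner boundary integral. Strictly speaking the lemma is stated for a large sphere $\partial\B(0,R)$, so the approximation to a general Lipschitz $\Gamma$ is where the hypothesis ``$\bx\in\Omega$ sufficiently far from $\Gamma$ compared to the wavelength'' enters: under this far-field regime the contributions coming from $\G_0^{\rm ee}$ are essentially outgoing spherical waves and one obtains
\begin{equation*}
\int_\Gamma \G_0^{\rm ee}(\bx-\xi,\omega)\overline{\G_0^{\rm ee}}(\xi-\by,\omega)\,d\sigma(\xi)\simeq \mu_0 c_0\,\Re e\bigl\{\G_0^{\rm ee}(\bx-\by,\omega)\bigr\}.
\end{equation*}
Plugging this back gives $\E^*(\bx,\omega)\simeq \mu_0 c_0\int_\Omega \Re e\{\G_0^{\rm ee}(\bx-\by,\omega)\}\J(\by)\,d\by$.

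The final step is to insert this into \eqref{I}, swap the $\omega$ and $\by$ integrals (Fubini, granted smoothness and compact support of $\J$), and recognise Lemma \ref{lem2}:
\begin{equation*}
\Ic(\bx)\simeq\int_\Omega\!\left[\frac{\epsilon_0}{2\pi}\int_\RR \Re e\bigl\{\G_0^{\rm ee}(\bx-\by,\omega)\bigr\}d\omega\right]\!\J(\by)\,d\by=\int_\Omega \delta_\bx(\by)\I\,\J(\by)\,d\by=\J(\bx).
\end{equation*}

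The main obstacle I expect is the Helmholtz–Kirchhoff step on a generic Lipschitz $\Gamma$ rather than on the sphere at infinity: one must argue that the $O(R^{-2})$ remainder in the Sommerfeld-type identity and the evanescent contributions near $\Gamma$ are negligible once $\bx$ and $\by$ sit deep inside $\Omega$ at distance $\gg 1/\K_0$ from $\Gamma$. Everything else is an unwinding of definitions together with a Fubini swap and the two preparatory lemmas.
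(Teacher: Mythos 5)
Your proposal is correct and follows essentially the same route as the paper: express $\E^*$ and $\bd$ via the Green's function, substitute into \eqref{I}, apply Lemma \ref{EM-HKI} to the inner boundary integral, and conclude with Lemma \ref{lem2}. Your closing remark about justifying the Helmholtz--Kirchhoff step on a generic Lipschitz $\Gamma$ (rather than a large sphere) is in fact more careful than the paper, which invokes the lemma on $\Gamma$ without comment.
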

\begin{proof}

Since $\J$ is supported compactly inside $\Omega$, for all $\bx\in\Omega$ and $\by\in\Gamma$, we have
\begin{eqnarray}
\begin{cases}
\E^\ast(\bx,\omega)
=\ds \int_{\Gamma}\G^{\rm{ee}}_0(\by-\bx,\omega)\overline{\bd}(\by,\omega)d\sigma(\by)
\\
{\bd}(\by,\omega)= \E(\by,\omega)\big|_{\by\in\Gamma}
=\ds\int_{\Omega}\G^{\rm{ee}}_0(\by-\bz,\omega)\J(\bz)d\bz\bigg|_{\by\in\Gamma}.
\end{cases}
\label{EDG-Link}
\end{eqnarray}

Therefore, by using \eqref{EDG-Link} in \eqref{I} we arrive at 
\begin{eqnarray*}
\Ic(\bx) =
\frac{\epsilon_0}{2\pi  c_0\mu_0}\int_{\RR^3}\int_\RR\left(\int_{\Gamma}\G_0^{\rm{ee}}(\bx-\by,\omega)\overline{\G^{\rm{ee}}_0}(\by-\bz,\omega) d\sigma(\by)\right)d\omega \J(\bz) d\bz.
\end{eqnarray*}

Now, we invoke the Helmholtz-Kirchhoff identity from Lemma \ref{EM-HKI} to have
\begin{equation}
\ds\int_{\Gamma}\G_0^{\rm{ee}}(\bx-\by,\omega)\overline{\G_0^{\rm{ee}}}(\by-\bz,\omega)d\sigma(\by)\simeq\mu_0 c_0\Re e\bigg\{\G_0^{\rm{ee}}(\bx-\bz,\omega)\bigg\},
\end{equation}
leading us to 
\begin{equation*}
\ds\Ic(\bx)\simeq\int_{\RR^3}\left(\frac{\epsilon_0}{2\pi}\int_\RR \Re e\bigg\{\G_0^{\rm{ee}}(\bx-\bz,\omega)\bigg\}d\omega \right)\J(\bz)d\bz.
\end{equation*}

Finally, using Lemma \ref{lem2}, we conclude that 
\begin{equation*}
\Ic(\bx)\simeq\ds\int_{\RR^3}\delta_\bx(\bz)\J(z)d\bz = \J(\bx).
\end{equation*}
\end{proof}

\section{Source Localization with Finite Set of Frequencies}\label{s:fin}

In this section, we address the electromagnetic inverse source problem using the boundary data $\bd_\W=\bd(\bx,\omega)\big|_{\Gamma\times\W}$.
First an initial guess is retrieved and then subsequently optimized using an $l_1-$regularization technique.

\subsection{Initial Guess Retrieval}\label{ss:initial}

Inspired by the functional $\Ic$ defined in \eqref{I}, we define a single frequency functional $\Ic_n$ by \eqref{I_n}. However, since we are dealing with a finite set of frequency measurements, the  lack of information over entire spectrum induces noise and blurring in the reconstruction. In order to fix the problem, in this section, an initial guess to the current source density is identified, which will be optimized providing an improved approximation to ${\rm{supp}}\{\J\}$.

Let $0\leq \K_0^1\leq \K_0^2\leq  \cdots \leq \K_0^N$ be $N$ wave numbers  
corresponding to $\omega_n\in\W$ for $n=1,2,\cdots, N$. Let us define the adjoint field $\E_n^*$ corresponding to a fixed frequency $\omega_n\in\W$ to be the solution to the Helmholtz equation
\begin{equation}
\nabla\times\nabla\times\E_n^*(\bx,\omega_n) -(\K_0^n)^2 \E_n^*(\bx,\omega_n) = i\omega_n\mu_0\overline{\bdw}(\bx,\omega_n)\chi_{\Gamma}(\bx), \quad (\bx,\omega_n)\in \RR^3\times\RR,
\label{adjointN}
\end{equation}
and the \emph{single frequency phase conjugation} functional by
\begin{equation}
\Ic_n(\bx):=\ds\frac{\epsilon_0}{2\pi c_0\mu_0}\E_n^*(\bx,\omega_n).\label{I_n}
\end{equation}

Our first result of this section is the following Lemma.

\begin{lem}
For all $\bx\in\Omega$ sufficiently far from  $\Gamma$, compared to the wavelength of the wave impinging upon $\Omega$,  
$$
\Ic_n(\bx)\simeq\frac{\epsilon_0}{2\pi}\int_\Omega\Re e\bigg\{\hG^{\rm{ee}}_0(\bx-\by,\omega_n)\bigg\}\J(\by)d\by.
$$
\end{lem}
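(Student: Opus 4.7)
The plan is to mimic the proof of Theorem \ref{thm:full} up to the point where, in the full-bandwidth argument, one integrates over $\omega\in\RR$ and appeals to Lemma \ref{lem2}. Since the present functional $\Ic_n$ involves only a single frequency $\omega_n$, the argument should stop one step earlier, at the Helmholtz--Kirchhoff identity.

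Concretely, I would first use the fact that $\J$ is compactly supported in $\Omega$ (so $\E^{*}_n$ can be represented via $\G_0^{\rm{ee}}$ against the boundary source $i\omega_n\mu_0\overline{\bdw}\chi_\Gamma$) to write
\begin{equation*}
\E^*_n(\bx,\omega_n) = \int_\Gamma \G_0^{\rm{ee}}(\by-\bx,\omega_n)\,\overline{\bdw}(\by,\omega_n)\,d\sigma(\by),
\end{equation*}
together with the representation of the data,
\begin{equation*}
\bdw(\by,\omega_n) = \int_\Omega \G_0^{\rm{ee}}(\by-\bz,\omega_n)\,\J(\bz)\,d\bz, \qquad \by\in\Gamma,
\end{equation*}
both directly from the Green's function for \eqref{Helm}. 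Substituting into \eqref{I_n} and swapping the order of integration (justified since $\Gamma$ is Lipschitz and $\J$ is smooth with compact support in $\Omega$) yields
\begin{equation*}
\Ic_n(\bx) = \frac{\epsilon_0}{2\pi c_0 \mu_0}\int_\Omega \left(\int_\Gamma \G_0^{\rm{ee}}(\bx-\by,\omega_n)\,\overline{\G_0^{\rm{ee}}}(\by-\bz,\omega_n)\,d\sigma(\by)\right) \J(\bz)\,d\bz,
\end{equation*}
where the reciprocity relation \eqref{reciprocity} has been used to rewrite $\G_0^{\rm{ee}}(\by-\bx,\omega_n)$ as $\G_0^{\rm{ee}}(\bx-\by,\omega_n)$.

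The inner surface integral is then estimated by invoking the EM Helmholtz--Kirchhoff identity (Lemma \ref{EM-HKI}),
\begin{equation*}
\int_\Gamma \G_0^{\rm{ee}}(\bx-\by,\omega_n)\,\overline{\G_0^{\rm{ee}}}(\by-\bz,\omega_n)\,d\sigma(\by) \simeq \mu_0 c_0\,\Re e\bigl\{\G_0^{\rm{ee}}(\bx-\bz,\omega_n)\bigr\},
\end{equation*}
after which the factor $\mu_0 c_0$ cancels against the prefactor $1/(c_0\mu_0)$, delivering the claim. The main obstacle, and the only non-routine point, is that Lemma \ref{EM-HKI} is stated on $\partial \B(0,R)$ as $R\to\infty$, whereas here the integration is over the fixed boundary $\Gamma$; this is precisely why the hypothesis that $\bx\in\Omega$ be far from $\Gamma$ compared to the wavelength $2\pi/\K_0^n$ enters, as it ensures that the non-radiating boundary contributions to the Sommerfeld-type expansion used in the proof of Lemma \ref{EM-HKI} are negligible, and thus the $\simeq$ approximation may be applied on $\Gamma$ with a controlled error. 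Note that, unlike in Theorem \ref{thm:full}, there is no subsequent frequency integration to collapse $\Re e\{\G_0^{\rm{ee}}(\bx-\bz,\omega_n)\}$ to $\delta_\bx(\bz)\I$; this is exactly the mechanism by which using only finitely many frequencies produces a blurred initial guess for $\J$, motivating the subsequent $l_1$-regularization step.
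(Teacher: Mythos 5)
Your proposal takes essentially the same route as the paper's proof: represent $\E_n^*$ and the data $\bdw$ through the Green's function, interchange the order of integration, and invoke the Helmholtz--Kirchhoff identity of Lemma \ref{EM-HKI} so that the resulting factor $\mu_0 c_0$ cancels the prefactor. Your closing remarks on why the far-from-$\Gamma$ hypothesis permits applying the identity on $\Gamma$ rather than on $\partial\B(0,R)$, and on the absence of the frequency integration that collapses the kernel to a Dirac mass, are slightly more explicit than the paper's own write-up but do not change the argument.
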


\begin{proof}
The proof is very similar to that of Theorem \ref{thm:full}.
Recall that 
\begin{eqnarray*}
\begin{cases}
\E^\ast_n(\bx,\omega_n) 
=\ds\int_{\Gamma}\G^{\rm{ee}}_0(\xi-\bx,\omega_n)\overline{\bdw}(\xi,\omega_n)d\sigma(\xi),
\\\nm
\bdw(\xi,\omega_n)=\E(\xi,\omega_n)\bigg|_{\Gamma}=\ds\int_\Omega\G^{\rm{ee}}_0(\by-\xi,\omega_n)\J(\by)d\by\bigg|_{\xi\in\partial\Omega}.
\end{cases}
\end{eqnarray*}
Therefore, 
\begin{eqnarray*}
\Ic_n(\bx)&=&\ds\frac{\epsilon_0}{2\pi c_0\mu_0}\E_n^*(\bx,\omega_n),
\\\nm
&=&
\ds\frac{\epsilon_0}{2\pi c_0\mu_0}
\int_\Omega\int_{\Gamma}\G^{\rm{ee}}_0(\xi-\bx,\omega_n)\overline{\G^{\rm{ee}}_0}(\by-\xi,\omega_n)d\sigma(\xi)\J(\by)d\by.
\end{eqnarray*}
Finally, we conclude, again using the electromagnetic Helmholtz-Kirchhoff identity from Lemma \ref{EM-HKI}.
\end{proof}



\subsection{Regularization}\label{ss:reg}

In this section, an optimized reconstruction to the current source density is provided using $\bd_\W$. The aim is to explore an $l_1-$regularization to optimize the localization of the support of the source. 

The objective is to resolve the following optimization problem:
\begin{eqnarray}
&&\J_\lambda(\bx):=\ds\argmin_{\widehat{\J}\in\RR^3}\OM\big(\widehat{\J}\big)+\OR\big(\widehat{\J}\big),
\label{TV}
\\
&&\OM(\widehat{\J}) :=\ds\dfrac{1}{2N}\sum_{n=1}^N\left\|\Ic_n(\bx)-\frac{\epsilon_0}{2\pi}\int_\Omega\Re e\bigg\{\G^{\rm{ee}}_0(\bx-\by,\omega_n)\bigg\}\widehat{\J}(\by)d\by\right\|^2
\\
&&\OR(\widehat{\J}):=\ds\lambda\left\|\widehat{\J}(\bx)\right\|_{l_1},
\label{TV:R}
\end{eqnarray}
where  the first term  $\OM$ is the data fidelity term and the second term $\OR$ accounts for the $l_1-$regularization. It is precised that $\lambda$ is a regularization parameter controlling the relative weights of the two terms and provides a trade-off between fidelity to the measurements and noise sensitivity. Here $\|\cdot\|$ denotes the Euclidean norm in $\RR^3$.

\subsubsection{Fast Iterative-Shrinkage Thrasholding Algorithm with Backtracking}

The direct computations of the solution $\J_\lambda$ to the minimization problem \eqref{TV} is not trivial, indeed, the $l_1-$term is not smooth, at least not differentiable. Thus, in order to obtain $\J_\lambda$ explicitly, approximation schemes are indispensable. In order to do so, we follow Beck \& Teboulle \cite{Beck_teboulle} and use their \emph{fast iterative shrinkage thresholding algorithm with backtracking}. This method belongs to the class of split gradient descent iterative schemes with a global convergence rate $O(k^{-2})$, where $k$ is the iteration counter.

For any $\gamma>0$, define the quadratic approximation of the Lagrangian 
$$
\OL\left(\widehat{\J},\lambda \right) = \OM(\widehat{\J})+\OR(\widehat{\J})
$$
by 
\begin{eqnarray}
\OP_\gamma(\bx,\by):= \OM(\by)+\langle \bx-\by,\nabla\OM(\by)\rangle + \frac{\gamma}{2}\|\bx-\by\|^2+\OR(\bx) 
\end{eqnarray}
We also define 
\begin{eqnarray}
\OT_\gamma(\by):= \argmin_{\bx\in \RR^3}\bigg\{\OP_\gamma(\bx,\by)\bigg\},
\end{eqnarray}
where $\bx,\by\in\RR^3$ and $\langle \cdot,\cdot\rangle$ is the standard Euclidean inner product in $\RR^3$. Then the Algorithm \ref{EM:Itr} converges to the global minimum;  see \cite{Beck_teboulle}:
\begin{algorithm}
\caption{ Fast Iterative-Shrinkage Thresholding with Backtracking.}
\label{EM:Itr}
\begin{algorithmic}[1]
\Require Set $\gamma_0>0$,\qquad $\eta>1$,\qquad $\bx_0={\bf{0}}$,\qquad $\by_1=\bx_0$,\qquad $s_1=1$.

	
    \For {$k\geq 1$} 
   		\State	Set $i_k=1$,\qquad $\beta= \eta\gamma_{k-1}$.
   		
   		\While {$\ds\OL\left(\OT_{\beta}(\by_k),\lambda\right)>\OP_{\beta}\left(\OT_{\beta}(\by_k),\by_k\right)$} 
   		
   		\State 	Update $i_k=i_k+1$,\qquad $\beta= \eta^{i_k}\gamma_{k-1}$.
   		
   		\EndWhile 
   		
		\State Set $\gamma_k = \beta$,\qquad $x_k=\OT_{\gamma_k}(\by_k)$.
   			
  		\State Update $s_{k+1}= \dfrac{1}{2}\left(1+\sqrt{1+4s^2_k}\right)$, \quad $\by_{k+1} = \bx_k + \left(\dfrac{s_{k-1}}{s_{k+1}}\right) \left(\bx_k-\bx_{k-1}\right)$,\quad $i_k=0$, \quad  $k=k+1.$
   	 
   \EndFor

\end{algorithmic}

\Return $\widehat{\J}=\bx_{k}$.
\end{algorithm}




\section{Conclusion}\label{s:conc}
In this investigation, electromagnetic inverse source problem is tackled using boundary measurements of the tangential component of electric field over a finite set of frequencies. A phase conjugation algorithm is proposed in order to deal with the problem associated with full frequency spectrum which subsequently inspired an imaging functional for that with finite set of frequencies. Since the information is lost due to incomplete frequency spectrum, an $l_1-$regularization blended with the fast iterative shrinkage thresholding algorithm with backtracking of Beck and Teboulle is deployed. 

\bibliographystyle{plain}

\end{document}